\newtheorem{theorem}{Theorem}
\newtheorem{definition}[theorem]{Definition}
\begin{document}

\title{Universal Critical Behavior of Transition to Chaos: Intermittency Route}

%%%% To generate auto affiliation numbers please use \author{}\affil{} command

\author{Ken-ichi Okubo\thanks{Present Address: Department of Information and Physical Sciences, 
		Graduate School of Information Science 
		and Technology, Osaka University
		1-5 Yamada-oka, Suita, Osaka 565-0871, Japan}}
\author{Ken Umeno}
\affil{Department of Applied Mathematics and Physics, 
	Graduate School of Informatics, Kyoto University, Yoshida-honmachi, Sakyo-ku, Kyoto 606-8501 Japan \email{okubo@ist.osaka-u.ac.jp}}

%%% To include the collaborator name... Please use the command "\collaborator"
%%% For example: \collaborator{ATLAS Collaboration}

\begin{abstract}%
The robustness of the universality class concept of the chaotic transition was investigated by analytically obtaining its critical exponent for a wide class of maps.
In particular, we extended the existing one-dimensional chaotic maps, thereby generalising the invariant density function from the Cauchy distribution by adding one parameter. This generalisation enables the adjustment of the power exponents of the density function and {superdiffusive behavior}. We proved that these generalised one-dimensional chaotic maps are exact (stronger condition than ergodicity) to obtain the critical exponent of the Lyapunov exponent from the phase average.
Furthermore, we proved that the critical exponent of the Lyapunov exponent is $\frac{1}{2}$ regardless of the power exponent of the density function and is thus universal. This result can be considered as rigorous proof of the universality of the critical exponent of the Lyapunov exponent for a countably infinite number of maps.
\end{abstract}

\subjectindex{A30, A33, A40}

\maketitle

\section{Introduction}
In the field of nonlinear and statistical physics, various theoretical and experimental research have been performed on the transition from a stable state to a chaotic state \cite{liu2002transition,lai2017quasiperiodicity, cosenza2010lyapunov,liu2003universal,milosavljevic2017analytic,ott1994blowout, benettin1984power, miller1988stochastic,  huberman1980scaling,lamba1994scaling,pomeau1980intermittent,manneville1979intermittency,hioe1987stability,swinney1983observations, ono1995critical, feng1996type,aizawa1984statistical}. Various routes to chaos have been identified considering the factors leading to chaos \cite{huberman1980scaling,pomeau1980intermittent,swinney1983observations}. Notably, the Lyapunov exponent $\lambda$ of the system above the critical point from the stable state to the chaotic state with parameter $a$ can be approximately expressed as
\begin{equation*}
    \lambda \sim b|a-a_c|^\nu,
\end{equation*}
where $b$ is a constant, $a_c$ is the critical point, and $\nu$ is the critical exponent. 
The critical exponent $\nu$ can be considered to be the crux indicator of phase transition between predictable and chaotic states as in other critical exponents of phase transition in physics.
Moreover, this exponent is expected to be universal and constant for each route to chaos regardless of minor differences among systems \cite{pomeau1980intermittent}. 
Particularly, Huberman and Rudnick performed a numerical simulation to estimate that for the period-doubling route, $\nu$ in the logistic map is $\frac{\log 2}{\log \delta}=0.4498069...$, where $\delta=4.669201...$ is the Feigenbaum constant\cite{huberman1980scaling}.
The intermittency route is another route. Pomeau and Manneville \cite{pomeau1980intermittent} estimated that {$\nu=1/2$} for intermittency types I and III. Recently, machine learning has been used to predict the critical phenomena of chaos \cite{kong2021machine}.

Moreover, researchers have rigorously proved that $\nu=1/2$ by demonstrating the ergodic property for one-parameter maps, which exhibit intermittency and preservation of the Cauchy distribution \cite{umeno2016exact,okubo2018universality}. 

Cauchy distribution has a fat tail in which the expectation value and the variance cannot be defined.
This fat tail can be observed in a lot of fields, including earthquakes \cite{utsu1995centenary,utsu1999representation}, the worldwide web \cite{albert1999diameter}, economics and finance \cite{gabaix2009power,gopikrishnan2000statistical}, and chaotic dynamics \cite{altmann2007hypothesis}.
According to the generalised central limit theorem (GCLT) \cite{gnedenko1954limit}, the limit distribution of summation of random numbers, which obey an arbitrary distribution on the domain of attraction characterised by the power exponent $\beta + 1$ such as $\rho(x) \approx |x|^{-(\beta+1)}$, converges to a stable distribution with exponent $\beta$. Therefore, the stable distribution with exponent $\beta$ can be regarded as a fixed point of the universality class of the renormalisation group in the domain of attraction characterised by $\beta$.

In this paper, we consider a system in which the class of invariant density functions is extended by adding a new parameter to the existing maps. Here, the class of {superdiffusive character} (power exponent of the invariant density function) can be changed by adding the parameter.
We prove that the critical exponent of the Lyapunov exponent is $\nu=1/2$ regardless of the power exponent by demonstrating the ergodic property although the universality class in the {GCLT} can
be changed by adjusting the added parameter $\beta$.
In other words, the universality of the critical exponent of the Lyapunov exponent is \textit{more robust} than that of the class of power exponents.

\section{Dynamics}
An extension of super generalised Boole (SGB) transformations \cite{okubo2018universality} is considered.
For SGB transformations, the Cauchy distribution is determined as a unique invariant density when appropriate conditions are satisfied.
Adopting the method of a previous study \cite{umeno1998superposition}, parameter $\beta$ is added in SGB transformations to extend the class of the invariant density.
The extended maps are described in the following.
Let $F_K(x)$ be the K-angle formula of the cot function such that
\begin{equation}
F_K(\cot \theta) \overset{\mathrm{def}}{=} \cot K\theta.
\end{equation}
For example, $F_3, F_4$, and $F_5$ are expressed as
\begin{equation*}
    \begin{array}{lll}
        F_3(x) &=& \displaystyle \frac{x^3-3x}{3x^2-1},\\
        F_4(x) &=& \displaystyle \frac{x^4-6x^2+1}{4x^3-4x},\\
        F_5(x) &=& \displaystyle \frac{x^5-10x^3+5x}{5x^4-10x^2+1}.
    \end{array}
\end{equation*}
Subsequently, function $U_{K, \alpha, \beta}(x)$ is defined as
\begin{equation}
U_{K,\alpha, \beta}(x) 
\overset{\mathrm{def}}{=} \left|\alpha K F_K\left(|x|^\beta\mbox{sgn}(x)\right)\right|^{\frac{1}{\beta}}\mbox{sgn}
\left\lbrace \alpha F_K\left(|x|^\beta\mbox{sgn}(x)\right)\right\rbrace,
\end{equation}
where $|\alpha|>0$ and $0<\beta<2$. 
The extended super generalised Boole (ESGB) transformation is defined as
\begin{equation}
x_{n+1} = U_{K,\alpha, \beta}(x_n),
\end{equation}
where SGB transformations correspond to the case with $\beta=1$.
For example, $U_{3,\alpha, \beta}, U_{4,\alpha, \beta}$, and $U_{5,\alpha, \beta}$ are given by
\begin{equation}
\begin{array}{lll}
     x_{n+1} = U_{3,\alpha, \beta}(x_n) &=& \displaystyle\left|3\alpha \frac{y^3-3y}{3y^2-1}\right|^\frac{1}{\beta}\mbox{sgn}
     \left\lbrace \alpha \frac{y^3-3y}{3y^2-1}\right\rbrace,\\
     x_{n+1} = U_{4,\alpha, \beta}(x_n) &=& \displaystyle\left|4\alpha \frac{y^4-6y^2+1}{4y^3-4y}\right|^\frac{1}{\beta} \mbox{sgn}
     \left\lbrace \alpha \frac{y^4-6y^2+1}{4y^3-4y} \right\rbrace,\\
     x_{n+1} = U_{5,\alpha, \beta}(x_n) &=& \displaystyle\left|5\alpha \frac{y^5-10y^3+5y}{5y^4-10y^2+1}\right|^\frac{1}{\beta}
     \mbox{sgn}\left\lbrace \alpha \frac{y^5-10y^3+5y}{5y^4-10y^2+1}\right\rbrace,
\end{array}
\end{equation}
where $y = |x_n|^\beta \mbox{sgn}(x_n)$.
Figure \ref{Fig: Form of ESGB} illustrates the behavior of the return maps of $U_{3,\frac{1}{3},\frac{1}{2}}$, $U_{4,\frac{1}{4},\frac{1}{2}}$, and $U_{5,\frac{1}{5},\frac{1}{2}}$.
Moreover, Figure \ref{Fig: Form of ESGB3-change-beta} shows $U_{3,\frac{1}{3},\frac{1}{2}}$, $U_{3,\frac{1}{3},1}$, and $U_{3,\frac{1}{3},\frac{3}{2}}$.
\begin{figure}[H]
	\centering
	\includegraphics[width = .6\columnwidth]{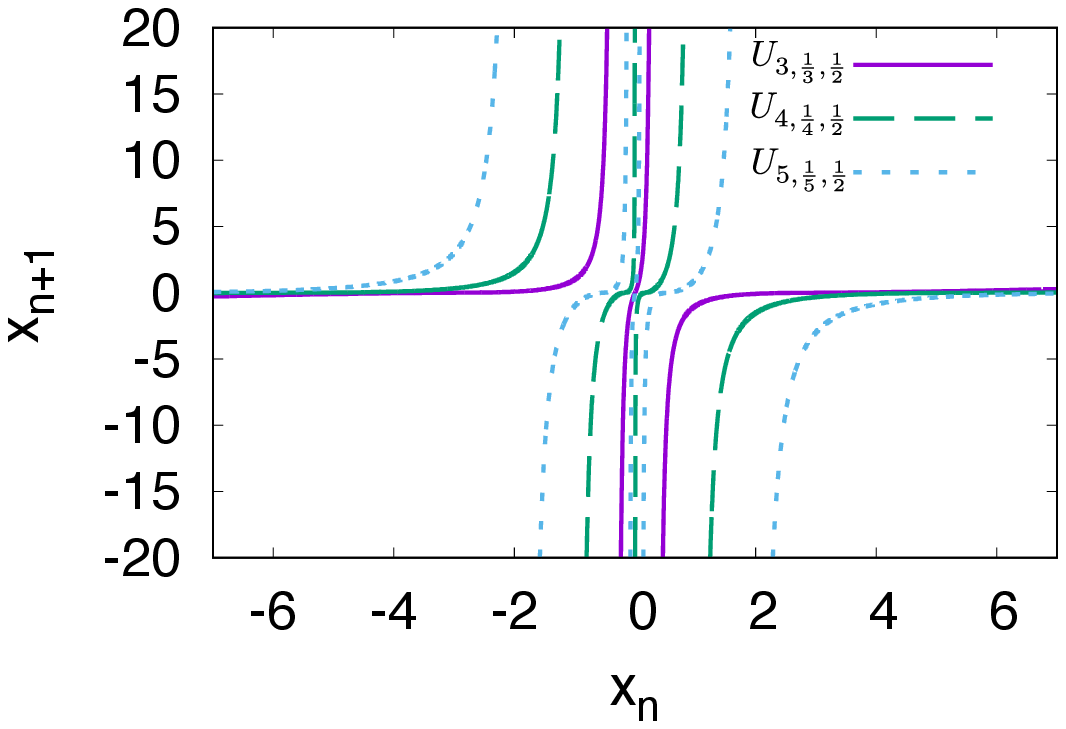}
	\caption{Return maps of $U_{3,\frac{1}{3},\frac{1}{2}}$, $U_{4,\frac{1}{4},\frac{1}{2}}$, and $U_{5,\frac{1}{5},\frac{1}{2}}$.
		The solid, broken, and dotted lines correspond to $U_{3,\frac{1}{3},\frac{1}{2}}$, $U_{4,\frac{1}{4},\frac{1}{2}}$, and $U_{5,\frac{1}{5},\frac{1}{2}}$,
		respectively.}
	\label{Fig: Form of ESGB}
\end{figure}

\begin{figure}[H]
	\centering
	\includegraphics[width = .6\columnwidth]{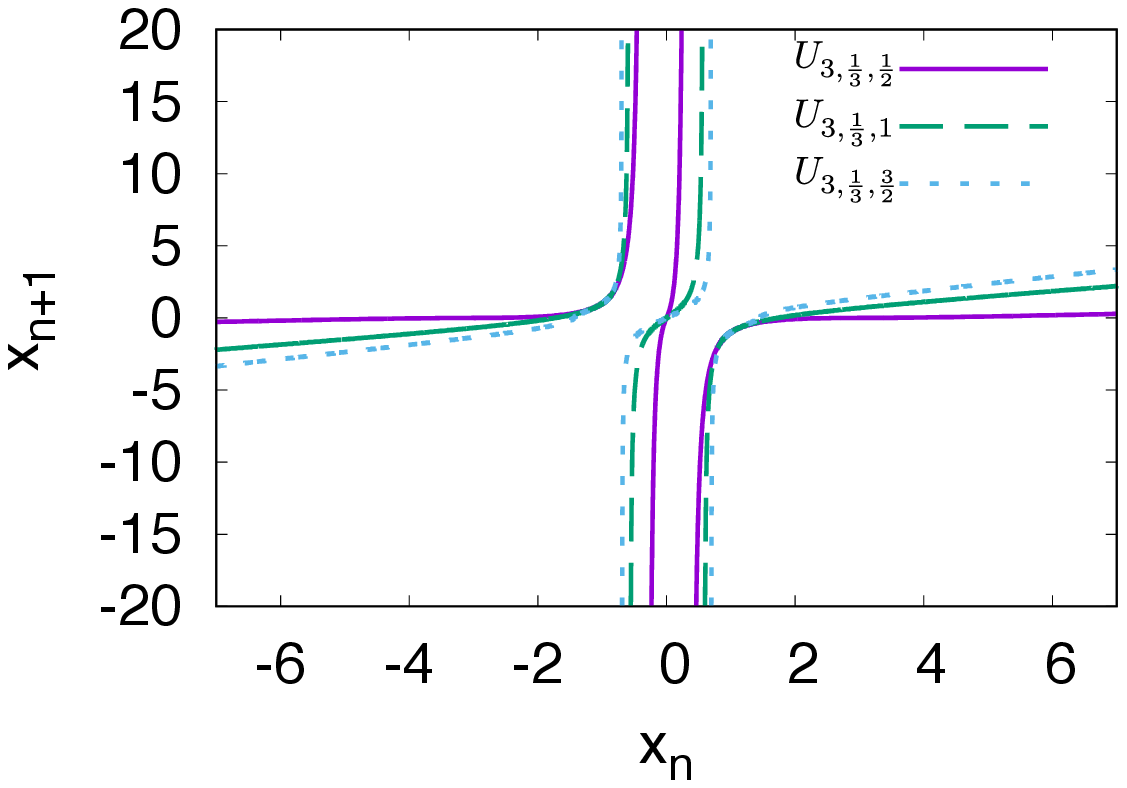}
	\caption{Return maps of $U_{3,\frac{1}{3},\frac{1}{2}}$, $U_{3,\frac{1}{3},1}$, and $U_{3,\frac{1}{3},\frac{3}{2}}$.
		The solid, broken, and dotted lines correspond to $U_{3,\frac{1}{3},\frac{1}{2}}$, $U_{3,\frac{1}{3},1}$, and $U_{3,\frac{1}{3},\frac{3}{2}}$,
		respectively.}
	\label{Fig: Form of ESGB3-change-beta}
\end{figure}

The map $U_{K, \alpha, \beta}$ is a $K$-to-one map; this is detailed {in the Appendix}.
{Consider} points $x_j, j= 1, \cdots, K$ on $\mathbb{R}$ as follows:
\begin{equation}
x_j=\left|\cot\left(\theta+j\frac{\pi}{K}\right)\right|^\frac{1}{\beta}\mbox{sgn}\left(\cot\left(\theta+j\frac{\pi}{K}\right)\right), j= 1,\cdots, K.
\end{equation}
In this case, $|x_j|^\beta$ and $|x_j|^\beta\mbox{sgn}(x)$ can be expressed as
\begin{equation}
\begin{array}{lll}
|x_j|^\beta &=& \left\lbrace 
\begin{array}{rl}
\cot\left(\theta+j\frac{\pi}{K}\right), & \cot\left(\theta+j\frac{\pi}{K}\right) \geq 0,\\
-\cot\left(\theta+j\frac{\pi}{K}\right), & \cot\left(\theta+j\frac{\pi}{K}\right)<0,
\end{array}
\right.\\
|x_j|^\beta\mbox{sgn}(x_j) &=& \cot\left(\theta+j\frac{\pi}{K}\right).
\end{array}
\end{equation}
The following equation can be obtained by transferring $x_j$ from $U_{K, \alpha, \beta}$.
\begin{equation}
U_{K,\alpha, \beta}(x_j) = \left|\alpha K\cot(K\theta)\right|^\frac{1}{\beta}\mbox{sgn}\left\lbrace \alpha\cot(K\theta)\right\rbrace. 
\end{equation}
Subsequently, we define $y=\left|\alpha K\cot(K\theta)\right|^\frac{1}{\beta}\mbox{sgn}\left\lbrace \alpha\cot(K\theta)\right\rbrace$. Note that 
there exist $K$ pieces $x_j, j=1,\cdots, K$, satisfying the relation $y=U_{K, \alpha, \beta}(x_j)$.

\section{Invariant density}
In this section, the condition in which the ESGB transformation has an invariant density is examined.
A density function
\begin{equation}
\rho_n(x) = \frac{1}{\pi}\frac{\gamma |x|^{\beta-1}}{\gamma^2+|x|^{2\beta}} \label{Eq: density}
\end{equation}
is known a priori. 
Based on the discussion in Section II, the map $U_{K, \alpha, \beta}$ is a $K$-to-one map.
Using the Perron--Frobenius operator, $\rho_{n+1}(y)$ can be obtained.
\begin{equation}
\rho_{n+1}(y)|dy| = \rho_n(x_{0})|dx|_{x=x_{0}}+\cdots+\rho_n(x_{K-1})|dx|_{x=x_{K-1}},
\end{equation}
where 
$x_j=\left|\cot\left(\theta+j\frac{\pi}{K}\right)\right|^{\frac{1}{\beta}}\mbox{sgn}\left\lbrace \cot\left(\theta+j\frac{\pi}{K}\right)\right\rbrace, 
j=0,\cdots,K-1$ and $y=\left|\alpha K \cot\left(K\theta\right)\right|^{\frac{1}{\beta}}\mbox{sgn}\left\lbrace \cot(K\theta)\right\rbrace $.
In this case,
\begin{equation}
\begin{array}{lll}
\displaystyle\left|\frac{dy}{dx}\right|_{x=x_j} &=& |\alpha|^\frac{1}{\beta}K^{\frac{1}{\beta}+1}\displaystyle \frac{\left|\cot(K\theta)\right|^{\frac{1}{\beta}-1}\sin^2\left(\theta+j\frac{\pi}{K}\right)}
{\left|\cot\left(\theta+j\frac{\pi}{K}\right)\right|^{\frac{1}{\beta}-1}\sin^2(K\theta)},\\
&=& |\alpha|^\frac{1}{\beta}K^{\frac{1}{\beta}+1}\displaystyle \frac{\left|\cot(K\theta)\right|^{\frac{1}{\beta}-1}\left(1+\frac{|y|^{2\beta}}{|\alpha K|^2}\right)}
{\left|\cot\left(\theta+j\frac{\pi}{K}\right)\right|^{\frac{1}{\beta}-1}\left(1+|x_j|^{2\beta}\right)}.
\end{array}
\end{equation}
Subsequently, $\rho_{n+1}(y)$ is given by
\begin{equation}
\begin{array}{lll}
\rho_{n+1}(y) &=& \displaystyle \sum_{j=1}^{K} \rho_n(x_j)\left|\frac{dx}{dy}\right|_{x=x_j}\\
&=& \displaystyle\frac{1}{|\alpha|^\frac{1}{\beta}K^{\frac{1}{\beta}+1}\left|\cot(K\theta)\right|^{\frac{1}{\beta}-1}\left(1+\frac{|y|^{2\beta}}{|\alpha K|^2}\right)}
\sum_{j=1}^K\rho_n(x_j)\left|\cot\left(\theta+j\frac{\pi}{K}\right)\right|^{\frac{1-\beta}{\beta}}\left(1+|x_j|^{2\beta}\right).
\end{array}
\end{equation}
In this case \cite{umeno2016ergodic}, the following expressions hold:
\begin{equation}
\begin{array}{lll}
\displaystyle \sum_{j=1}^K\rho_n(x_j)\left|\cot\left(\theta+j\frac{\pi}{K}\right)\right|^{\frac{1-\beta}{\beta}}\left(1+|x_j|^{2\beta}\right)
&=& \displaystyle\frac{1}{\pi}\sum_{j=1}^K\frac{\gamma}{1+(\gamma^2-1)\sin^2\left(\theta+j\frac{\pi}{K}\right)}\\
&=& \displaystyle\frac{1}{\pi}\frac{K G_K(\gamma)\left(1+\cot^2(K\theta)\right)}{G_K^2(\gamma)+\cot^2(K\theta)},
\end{array}
\end{equation}
where $G_K(x)$ corresponds to the K-angle formula of the coth function, which is expressed as 
\begin{equation*}
    G_K(\coth \theta) \overset{\mathrm{def}}{=} \coth K\theta.
\end{equation*}
Therefore, 
\begin{equation}
\rho_{n+1}(y) = \frac{1}{\pi} \frac{|\alpha K| G_K(\gamma)|y|^{\beta-1}}{|\alpha K|^2G_K^2(\gamma)+|y|^{2\beta}}.
\end{equation}
In one iteration, although the form of the density function does not change, the scale parameter can be mapped as
\begin{equation}
\gamma \to |\alpha| K G_K(\gamma).
\end{equation}
The necessary condition that the density associated with \eqref{Eq: density} is the invariant density is expressed as
\begin{equation}
\gamma = |\alpha| K G_K(\gamma). \label{Eq: beta parameter condition}
\end{equation}
Equation \eqref{Eq: beta parameter condition} does not depend on $\beta$. Thus, we can apply the condition reported in the previous study \cite{okubo2021infinite}.
%This condition appears in \cite{Okubo21} by changing $\alpha$ to $|\alpha|$ in which it is proven that this condition is satisfied when the parameters
%$(K, \alpha)$ is in the Range A.
\begin{definition}
	When parameters $(K, \alpha)$ satisfy the following condition: 
	\begin{equation}
	\left\lbrace
	\begin{array}{lll}
	0 < |\alpha| < 1 & \mbox{in the case of} & K = 2N,\\
	\frac{1}{K^2} < |\alpha| < 1 & \mbox{in the case of} & K=2N+1,
	\end{array}
	 \right.
	\end{equation}
	where $N \in \mathbb{N}$ and parameters $(K, \alpha)$ are in Range B.
\end{definition}

\begin{theorem}
	The ESGB transformations $U_{K, \alpha, \beta}$ preserve the density function if parameters $(K, \alpha)$ are in Range B, which is expressed as
	\eqref{Eq: density}.
\end{theorem}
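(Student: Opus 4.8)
The plan is to recognise that essentially all of the analytic work has already been carried out in the Perron--Frobenius computation preceding the statement: it shows that $U_{K,\alpha,\beta}$ sends a density of the form \eqref{Eq: density} with scale parameter $\gamma$ to another density of \emph{exactly} the same form, but with $\gamma$ replaced by $|\alpha| K G_K(\gamma)$. Consequently \eqref{Eq: density} is genuinely invariant precisely when $\gamma$ is a positive fixed point of $g(\gamma):=|\alpha| K G_K(\gamma)$, i.e.\ a positive root of the $\beta$-independent relation \eqref{Eq: beta parameter condition}. The theorem therefore reduces to the purely one-dimensional claim that $g$ possesses a fixed point $\gamma^{*}>0$ whenever $(K,\alpha)$ lies in Range B. Since \eqref{Eq: beta parameter condition} does not involve $\beta$, this is exactly the assertion already established for the $\beta=1$ (SGB) case in \cite{okubo2021infinite}, and I would either cite it directly or re-derive it along the following lines.

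First I would make $G_K$ explicit. From $G_K(\coth\theta)=\coth K\theta$ together with $e^{2\theta}=(x+1)/(x-1)$ for $x=\coth\theta$, one obtains the closed form $G_K(x)=\dfrac{(x+1)^{K}+(x-1)^{K}}{(x+1)^{K}-(x-1)^{K}}$. Expanding by the binomial theorem shows that on $\gamma>0$ both numerator and denominator are sums of strictly positive monomials, so $G_K(\gamma)>0$; hence $g$ maps $(0,\infty)$ into itself, the quotient $\Phi(\gamma):=g(\gamma)/\gamma=|\alpha|K\,G_K(\gamma)/\gamma$ is continuous on $(0,\infty)$, and a fixed point of $g$ is exactly a solution of $\Phi(\gamma)=1$. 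I would then read the two boundary limits off the closed form: as $\gamma\to\infty$ the leading terms give $G_K(\gamma)\sim\gamma/K$, so $\Phi(\gamma)\to|\alpha|$ for every $K$; as $\gamma\to0^{+}$ the behaviour splits on the parity of $K$, and a short expansion finishes the argument through the intermediate value theorem.

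The main obstacle is precisely this boundary analysis and the parity split. For even $K$ the denominator of $G_K$ vanishes at $\gamma=0$ while the numerator does not, so $G_K$ has a pole and $\Phi(\gamma)\to+\infty$; for odd $K$ the numerator vanishes to first order with $G_K(\gamma)\sim K\gamma$, giving the finite limit $\Phi(\gamma)\to|\alpha|K^{2}$. Combined with $\Phi(\infty)=|\alpha|$ in both cases, the intermediate value theorem produces $\gamma^{*}$ with $\Phi(\gamma^{*})=1$ exactly when $1$ lies strictly between the two boundary values of $\Phi$, i.e.\ under $0<|\alpha|<1$ for $K=2N$ and $1/K^{2}<|\alpha|<1$ for $K=2N+1$; these are the defining inequalities of Range B. Monotonicity of $\Phi$ is not required for the stated one-directional implication, although verifying $\Phi'<0$ would additionally yield uniqueness of $\gamma^{*}$. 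Finally I would note that the measure \eqref{Eq: density} is finite, so $\gamma^{*}$ indeed furnishes a bona fide invariant density, completing the proof.
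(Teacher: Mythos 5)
Your proposal is correct and follows essentially the same route as the paper: both reduce the claim, via the Perron--Frobenius computation, to the $\beta$-independent fixed-point equation $\gamma=|\alpha| K G_K(\gamma)$ and then invoke the existence of $\gamma^{*}$ established for the $\beta=1$ (SGB) case, with negative $\alpha$ handled by passing to $|\alpha|$. The only difference is that the paper simply cites \cite{okubo2018universality,okubo2021infinite} for the existence of $\gamma^{*}$, whereas you supply a correct self-contained derivation (closed form of $G_K$, the boundary limits $\Phi(0^{+})=+\infty$ or $|\alpha|K^{2}$ depending on parity and $\Phi(\infty)=|\alpha|$, and the intermediate value theorem), which indeed reproduces exactly the Range B inequalities.
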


\begin{proof}
    When parameters $(K, \alpha)$ are in Range A, they are defined as
    \begin{equation}
	\left\lbrace
	\begin{array}{lll}
	0 < \alpha < 1 & \mbox{in the case of} & K = 2N,\\
	\frac{1}{K^2} < \alpha < 1 & \mbox{in the case of} & K=2N+1,
	\end{array}
	 \right.
	\end{equation}
	\eqref{Eq: beta parameter condition} has the unique solution $\gamma^*$ \cite{okubo2018universality}.
	Range A' is defined as
	\begin{equation}
	\left\lbrace
	\begin{array}{lll}
	-1 < \alpha < 0 & \mbox{in the case of} & K = 2N,\\
	-1 < \alpha < -\frac{1}{K^2} & \mbox{in the case of} & K=2N+1.
	\end{array}
	 \right.
	\end{equation}
	For Range A', by extending $\alpha$ to $|\alpha|$, a proof can be provided using a similar approach as in the existing studies \cite{okubo2018universality,okubo2021infinite}.
\end{proof}

\section{Exactness}
The ESGB transformations are exact when parameters $(K, \alpha)$ are in Range B. The following presents the proof of exactness.

\begin{theorem}
	ESGB transformations $U_{K, \alpha, \beta}$ are exact if parameters $(K, \alpha)$ are in Range B.
\end{theorem}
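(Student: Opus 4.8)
The plan is to strip away the parameter $\beta$ by a conjugacy and thereby reduce the claim to the single map $U_{K,\alpha,1}$. Introduce the homeomorphism $\phi_\beta(x)=|x|^{\beta}\,\mathrm{sgn}(x)$ of $\mathbb{R}$, whose inverse is $\phi_\beta^{-1}(w)=|w|^{1/\beta}\,\mathrm{sgn}(w)$. Reading off the definition of $U_{K,\alpha,\beta}$ and using $K>0$,
\begin{equation*}
U_{K,\alpha,\beta}(x)=\bigl|\alpha K F_K(\phi_\beta(x))\bigr|^{1/\beta}\,\mathrm{sgn}\bigl\{\alpha F_K(\phi_\beta(x))\bigr\}=\phi_\beta^{-1}\bigl(\alpha K F_K(\phi_\beta(x))\bigr).
\end{equation*}
Since $U_{K,\alpha,1}(y)=\alpha K F_K(y)$ is precisely the SGB transformation, this is the conjugacy identity $U_{K,\alpha,\beta}=\phi_\beta^{-1}\circ U_{K,\alpha,1}\circ\phi_\beta$. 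In other words $\beta$ only reparametrises the line through $\phi_\beta$, consistent with the $\beta$-independence of \eqref{Eq: beta parameter condition} and with the reduction $U_{K,\alpha,\beta}(x_j)=|\alpha K\cot K\theta|^{1/\beta}\mathrm{sgn}\{\alpha\cot K\theta\}$ computed in Section II.

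Next I would transfer exactness along this conjugacy. Exactness of the tail $\sigma$-algebra $\bigcap_{n\ge0}U^{-n}\mathcal{B}$ is a purely measure-theoretic invariant: if $\phi_\beta$ is an isomorphism of the two invariant probability spaces that intertwines the dynamics, then $U_{K,\alpha,\beta}$ is exact if and only if $U_{K,\alpha,1}$ is. To check the hypothesis, a change of variables $y=\phi_\beta(x)$ with $dy=\beta|x|^{\beta-1}dx$ carries the Cauchy density $\tfrac{1}{\pi}\tfrac{\gamma}{\gamma^{2}+y^{2}}$ (invariant for $U_{K,\alpha,1}$) onto \eqref{Eq: density} up to normalisation, so $\phi_\beta$ is indeed the required measure isomorphism away from the single null point $x=0$. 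It therefore suffices to establish exactness of the $\beta=1$ map, which is the SGB transformation treated in \cite{okubo2018universality,okubo2021infinite}.

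If one prefers a self-contained argument for the $\beta=1$ case, I would work through the Perron--Frobenius operator $P$ and prove asymptotic stability, $\|P^{n}f-\rho\|_{L^{1}}\to0$ for every density $f$, which for a nonsingular map is equivalent to exactness. The computation of Section III already shows that $P$ maps the Cauchy family $\rho_\gamma$ by the scale iteration $\gamma\mapsto|\alpha|KG_K(\gamma)$, and in Range B this iteration converges to the unique fixed point $\gamma^{*}$ of \eqref{Eq: beta parameter condition}; convergence for these initial data is thus immediate. Geometrically, the substitution $x=\cot\theta$ turns $U_{K,\alpha,1}$ into a smooth degree-$K$ circle map (the linear expanding map $\theta\mapsto K\theta$ distorted by a Möbius factor carrying $\alpha$), and a direct derivative estimate shows this circle map is uniformly expanding once $|\alpha|$ is bounded away from the lower edge of Range B.

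I expect the main obstacle to lie in this direct route. First, the expansion degenerates near the lower edge of the admissible $\alpha$-window --- exactly the nearly tangent behavior responsible for the intermittency studied in this paper --- so uniform expansion fails on part of Range B and one must control that regime by an inducing or tower argument rather than by a naive Rényi/Rokhlin estimate. Second, and more seriously, asymptotic stability must be promoted from the explicit one-parameter Cauchy family to \emph{all} $L^{1}$ densities, which requires a genuine spreading or mixing bound for $P$. The conjugacy route of the first two paragraphs circumvents both difficulties, reducing the whole $\beta$-family to one already-settled case; its only delicate points are confirming that the cited exactness of $U_{K,\alpha,1}$ covers the entire Range B and that the $\alpha\to|\alpha|$ extension to Range A$'$ proceeds as in the invariant-density theorem, both of which are routine.
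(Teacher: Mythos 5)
Your proposal is correct and takes essentially the same route as the paper: the paper's proof also strips out $\beta$ by a change of variables (writing $x_n=|\cot(\pi\theta_n)|^{1/\beta}\mathrm{sgn}\{\cot(\pi\theta_n)\}$, which is just your $\phi_\beta$ composed with the $\cot$ conjugacy) to arrive at the $\beta$-independent circle map $\bar S_{K,\alpha}(\theta)=\frac{1}{\pi}\mathrm{arccot}\{\alpha K\cot(\pi K\theta)\}$, and then invokes the exactness argument of \cite{okubo2018universality,okubo2021infinite} for that map on Range B. Your explicit verification that $\phi_\beta$ pushes the invariant measure of $U_{K,\alpha,\beta}$ onto the Cauchy measure (noting the normalisation factor $\beta$) is if anything slightly more careful than the paper's treatment; the self-contained Perron--Frobenius route you sketch is not needed and is not what the paper does.
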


\begin{proof}
	The variable $x_n$ is defined as
	\begin{equation}
	x_n = \left| \cot(\pi\theta_n)\right|^{\frac{1}{\beta}}\mbox{sgn}\left\lbrace \cot(\pi\theta_n)\right\rbrace. 
	\end{equation}
	Subsequently, ESGB transformations can be expressed as
	\begin{equation}
	U_{K, \alpha, \beta}(x_n) = \left|\alpha K \cot(\pi K\theta_n)\right|^{\frac{1}{\beta}}\mbox{sgn}\left\lbrace
	\alpha\cot(\pi K \theta_n)\right\rbrace.
	\end{equation}
	Therefore, the following expression can be derived as
	\begin{equation}
	\begin{array}{ccc}
		x_{n+1} &=& \left| \cot(\pi\theta_{n+1})\right|^{\frac{1}{\beta}}\mbox{sgn}\left\lbrace\cot(\pi\theta_{n+1})\right\rbrace\\
		&=& \left|\alpha K \cot(\pi K\theta_n)\right|^{\frac{1}{\beta}}\mbox{sgn}\left\lbrace\alpha\cot(\pi K \theta_n)\right\rbrace.
	\end{array}
	\end{equation}
	Since the signs of both sides are consistent, 
	\begin{equation}
	\mbox{sgn}\left\lbrace \cot(\pi \theta_{n+1})\right\rbrace = \mbox{sgn}\left\lbrace \alpha\cot(\pi K\theta_n)\right\rbrace.  
	\end{equation}
	For $K>0$, we define a map 
	$\Bar{S}_{K,\alpha}$ such that
	\begin{equation}
	\theta_{n+1} = \Bar{S}_{K,\alpha}(\theta_n)=\frac{1}{\pi}\mbox{arccot}\left\lbrace \alpha K \cot(\pi K \theta_n)\right\rbrace. \label{Eq: exact}
	\end{equation}
	Equation \eqref{Eq: exact} is an extension of the equation which appears in the proof part in
	an existing study \cite{okubo2018universality}. Therefore, the map $U_{K, \alpha, \beta}$ is proven exact if $(K, \alpha)$ are in Range B because 
    the  argument to prove the exactness can be applied\cite{okubo2018universality,okubo2021infinite}.
\end{proof}
	{The initial density function converges to Eq. (\ref{Eq: density}) when parameters $(K, \alpha)$ are in Range B.
	(At $\beta=1$, an initial density function converges to the Cauchy distribution, which is a stable distribution.)}

\clearpage
\section{{L\'evy Flight}}
In ESGB transformations, the power exponent of the invariant density function $\rho_{\gamma,\beta}(x)=\frac{1}{\pi}\frac{\gamma|x|^{\beta-1}}{|x|^{2\beta}+\gamma^2}$ can be changed.
For $\beta=1$, we obtained the Cauchy distribution corresponding to the invariant density of SGB transformations.

In the range in which $|x|$ is adequately large, 
\begin{equation}
\rho_{\gamma,\beta} \simeq 
\left\lbrace
\begin{array}{ccc}
{\frac{\gamma}{\pi}}|x|^{-(\beta+1)} &\mbox{for} & x \to \infty,\\
{\frac{\gamma}{\pi}}|x|^{-(\beta+1)} &\mbox{for} & x \to -\infty.
\end{array}
\right.
\end{equation}
Moreover, according to the GCLT \cite{gnedenko1954limit}, 
for the independent and identically distributed random variable $\{x_k\}_{k=1}^\infty$ that follows the density function $\rho_{\gamma,\beta}$, $\frac{\sum_{k=1}^nx_k-A_n}{n^{\frac{1}{\beta}}}$ converges to a stable distribution {with the characteristic exponent $\beta$}\cite{umeno1998superposition}.
$A_n$ is represented as
\begin{equation*}
    A_n = \left\lbrace
\begin{array}{cc}
0, &  0< \beta <1,\\
n^2\mathfrak{F}\log\left(\phi_x\left(\frac{1}{x}\right)\right), & \beta=1,\\
n\mathbb{E}[x], & 1 < \beta < 2,
\end{array}    
    \right.
\end{equation*}
where $\phi_x$ is the characteristic function of $x$.
In this case, {as $\frac{\sum_{k=1}^n x_k -A_n}{n^{1/\beta}}= O(1)$ for large $n$,
\begin{equation}
\begin{array}{ccc}
\displaystyle \sum_{k=1}^n x_k &=& O\left(n^\frac{1}{\beta}\right),  \\
\displaystyle \left(\sum_{k=1}^n x_k\right)^2 & = & O\left(n^{\frac{2}{\beta}}\right). 
\end{array}
\end{equation}
}
For the following dynamics, $p_{n+1}=p_n + x_n$, if we calculate the {mean} square displacement {$\langle(p_n-p_0)^2\rangle = \langle\left(\sum_{k=0}^{n-1}x_k\right)^2\rangle$},
its order obeys $O\left(n^{\eta}\right), \eta=\frac{2}{\beta}, {1\leq}\beta<2$, {where the ensemble is taken from within a bounded region except for the infinity point}. {Therefore, this L\'evy flight represented by the random walk $p_n =\sum_{i=0}^{n-1} x_i+p_0$ exhibits the \textit{superdiffusive} character with a power exponent $2/\beta$ for a finite time $n$ when $1\leq \beta<2$.}
Figure \ref{Fig: superdiffusion} shows the {supperdiffusive behavior} of $\langle |p_n-p_0|^q \rangle^{2/q}$
with power exponent $\eta=\frac{2}{\beta} = \frac{4}{3}$, {where $q=1<\beta$.}
\begin{figure}[H]
	\centering
	\includegraphics[width = .6\columnwidth]{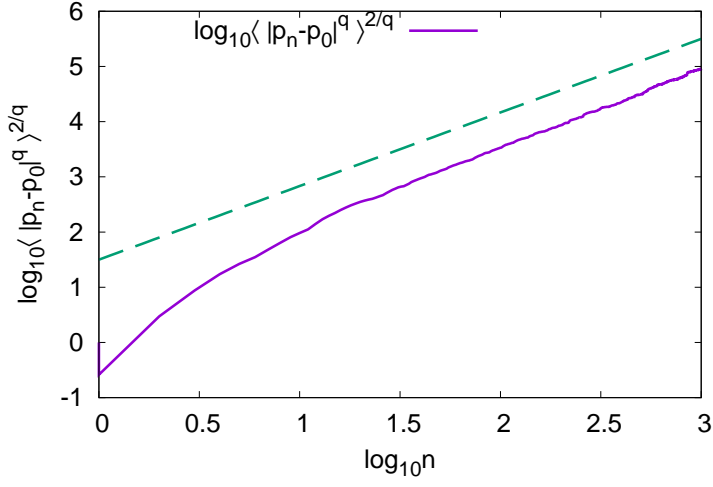}
	\caption{{Time development of $\langle |p_n-p_0|^q \rangle^{2/q}$} for $U_{3, \frac{1}{3}, \frac{3}{2}}$ (solid line) and a line with slope $\frac{4}{3}$ (broken line). The number of the ensemble is $1000$. As the initial conditions, $\{x_0\}$ {are uniformly distributed within $(-1,1)$}, and $\{p_0\}$ is set to be zero.}
	\label{Fig: superdiffusion}
\end{figure}

The power exponent $\eta$ can be continuously changed by changing $\beta$, which is the characteristic exponent of the class of superdiffusion.

\section{Critical exponent of the Lyapunov exponent}
We defined the critical exponent $\nu$ of the Lyapunov exponent as the supremum of $x$, which satisfies the equation below. 
\begin{equation}
\nu = \sup \left\lbrace x \in \mathbb{R}\left| \frac{\lambda}{|\alpha-\alpha_c|^x}\to 0~\mbox{as}~ \alpha \to \alpha_c\right.\right\rbrace,
\end{equation}
where $\alpha_c$ is the critical point of parameter $\alpha$.

According to this discussion, the behavior of $\gamma_{K, \alpha}$ as $\gamma_{K, \alpha} \to \infty$ or $\gamma_{K, \alpha} \to +0$
is the same as that in SGB transformations \cite{okubo2018universality}. Therefore,
for any $K\in \mathbb{N}\backslash{1}$,
\begin{equation}
\displaystyle \frac{1}{\gamma_{K, \alpha}}= O\left(\sqrt{1-|\alpha|}\right)
\end{equation}
 in the limit of {$\alpha \to 1-0$ or $\alpha \to -1+0$}.
For $K=2N+1$,
\begin{equation}
{\gamma_{2N+1,\alpha}} = O\left(\sqrt{|\alpha| - \frac{1}{(2N+1)^2}}\right)
\end{equation} 
in the limit of {$\alpha \to \frac{1}{(2N+1)^2}+0$ or $\alpha \to -\frac{1}{(2N+1)^2}-0$}. Moreover, 
for $K=2N$,
\begin{equation}
\gamma_{2N,\alpha} \sim \sqrt{|\alpha|} \label{Scaling 0<alpha<1/K, K=2N honbun}
\end{equation} in the limit of {$\alpha \to \pm0$}.

The invariant density functions in ESGB transformations are extended from the Cauchy distribution, corresponding to the invariant density of SGB transformations.
However, we proved that the critical exponents of the Lyapunov exponent for ESGB transformations
are the same as those for SGB transformations regardless of the power exponent of invariant density functions. 

The following theorem holds for the scaling behavior of the Lyapunov exponents $\lambda \sim b\left|\alpha- \alpha_{c}\right|^{\nu}$
for ESGB transformations.
\begin{theorem}\label{Scaling behavior U_{K, alpha, beta}}
	Assume that the parameters $(K, \alpha)$ are in Range B.
		For any $N\in \mathbb{N}\backslash\{1\}$, 
		{$\nu=\frac{1}{2}$ at $(K, \alpha) = (N, 1-0),(N, -1+0), 
		\left(2N+1, \frac{1}{(2N+1)^2}+0 \right), \left(2N+1, -\frac{1}{(2N+1)^2}-0\right)$, and $(2N, \pm 0)$}.
\end{theorem}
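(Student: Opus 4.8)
The plan is to use exactness to replace the time-averaged Lyapunov exponent by a phase average, and then to show that this phase average is \emph{exactly} independent of $\beta$, so the problem collapses onto the already-understood SGB case. By the exactness theorem proved above, $U_{K,\alpha,\beta}$ is ergodic with invariant density $\rho_{\gamma,\beta}$, so Birkhoff's theorem gives, for $\rho_{\gamma,\beta}$-a.e. initial point,
\[
\lambda = \int_{-\infty}^{\infty}\log\left|U_{K,\alpha,\beta}'(x)\right|\,\rho_{\gamma,\beta}(x)\,dx ,
\]
where one checks $\log|U_{K,\alpha,\beta}'|\in L^{1}(\rho_{\gamma,\beta})$. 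The key device is the substitution $u=g(x):=|x|^{\beta}\mathrm{sgn}(x)$: it is a homeomorphism of $\mathbb{R}$, smooth away from $x=0$, it pushes $\rho_{\gamma,\beta}$ forward to a constant multiple of the Cauchy density $\tilde\rho(u)=\tfrac1\pi\tfrac{\gamma}{\gamma^{2}+u^{2}}$, and by the defining formula for $U_{K,\alpha,\beta}$ it conjugates the map to the SGB transformation $V(u)=\alpha K F_{K}(u)$, which carries no $\beta$.

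Next I would remove the $\beta$-dependence of $\lambda$ exactly. Writing $U_{K,\alpha,\beta}=g^{-1}\circ V\circ g$ and using $(g^{-1})'(w)=1/g'(g^{-1}(w))$ gives
\[
\log\left|U_{K,\alpha,\beta}'(x)\right|=\log\left|V'(g(x))\right|+\log|g'(x)|-\log\left|g'(U_{K,\alpha,\beta}(x))\right| .
\]
Because $\rho_{\gamma,\beta}$ is $U_{K,\alpha,\beta}$-invariant, $\int\log|g'(U_{K,\alpha,\beta}(x))|\,\rho_{\gamma,\beta}\,dx=\int\log|g'(y)|\,\rho_{\gamma,\beta}\,dy$, so the last two terms cancel; hence $\lambda=\int\log|V'(u)|\,\tilde\rho(u)\,du$ equals the SGB Lyapunov exponent and is independent of $\beta$. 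The only point needing care is that this cancellation is legitimate, i.e. $\log|g'|=\log\beta+(\beta-1)\log|x|\in L^{1}(\rho_{\gamma,\beta})$; this holds for all $0<\beta<2$ because $\rho_{\gamma,\beta}\sim\mathrm{const}\cdot|x|^{\beta-1}$ near $x=0$ and $\sim\mathrm{const}\cdot|x|^{-\beta-1}$ near $x=\infty$, against which $\log|x|$ is integrable.

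I would then import the SGB scaling. Having reduced $\lambda$ to $\int\log\left|\alpha K^{2}(1+F_{K}(u)^{2})/(1+u^{2})\right|\tilde\rho(u)\,du$, the limits of interest are $\gamma\to\infty$ (at $|\alpha|\to1$) and $\gamma\to+0$ (at $|\alpha|\to 1/(2N+1)^{2}$ for odd $K$, and at $|\alpha|\to0$ for even $K$). The cited SGB analysis shows $\lambda$ vanishes to first order in $1/\gamma$ in the former limit and to first order in $\gamma$ in the latter, with a nonzero coefficient. Combining this with the square-root scalings stated just above the theorem, namely $1/\gamma_{K,\alpha}=O(\sqrt{1-|\alpha|})$, $\gamma_{2N+1,\alpha}=O(\sqrt{|\alpha|-1/(2N+1)^{2}})$, and $\gamma_{2N,\alpha}\sim\sqrt{|\alpha|}$, yields $\lambda\sim b\,|\alpha-\alpha_{c}|^{1/2}$ at each listed critical point, so the supremum defining $\nu$ equals $\tfrac12$.

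The main obstacle is \emph{not} the $\beta$-reduction, which is clean once the integrability of $\log|g'|$ is verified, but the confirmation that the SGB Lyapunov exponent vanishes to \emph{exactly} first order, i.e. that the leading coefficient $b$ is nonzero, at each critical point. This requires the explicit leading-order evaluation of the Poisson-type Cauchy integral above in the limits $\gamma\to\infty$ and $\gamma\to+0$ (equivalently, confirming $\lambda$ does not vanish faster than $\sqrt{|\alpha-\alpha_c|}$), which is precisely where the established results of the cited SGB works become indispensable.
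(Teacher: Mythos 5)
Your reduction is correct and it is a genuinely different, and in one respect cleaner, route than the paper's. The paper works throughout in the angle variable $\theta$, keeps the full $\beta$-dependent integrand $\phi_{1},\phi_{2}$, and Taylor-expands the phase average in $\gamma_{K,\alpha}$ or $z=1/\gamma_{K,\alpha}$ separately in three cases, checking in each that the relevant partial derivative stays bounded on the subintervals $(a_n,a_{n+1}]$. You instead observe that $U_{K,\alpha,\beta}=g^{-1}\circ V\circ g$ with $g(x)=|x|^{\beta}\mathrm{sgn}(x)$ and $V=U_{K,\alpha,1}$, and that the $\beta$-dependent part of $\log|U_{K,\alpha,\beta}'|$ is exactly the coboundary $\log|g'(x)|-\log|g'(U_{K,\alpha,\beta}(x))|$, which integrates to zero against the invariant density once $\log|g'|\in L^{1}(\rho_{\gamma,\beta})$ is checked (your integrability check at $0$ and $\infty$ is right). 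One can verify directly against the paper's formula that the extra factor $\left||\alpha K|\,\cot(K\theta)/\cot\theta\right|^{\frac{1}{\beta}-1}$ in $dU/dx$ is precisely this coboundary, so your conclusion that $\lambda_{K,\alpha,\beta}=\lambda_{K,\alpha,1}$ \emph{exactly} is a strictly stronger statement than the paper proves (the paper only concludes that $\nu$ is $\beta$-independent). This buys a one-line reduction of all three limits to the SGB case and makes the universality in $\beta$ structurally transparent.

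There is, however, one concrete inaccuracy to repair. At $(K,\alpha)=(2N,\pm 0)$ the statement that ``$\lambda$ vanishes to first order in $\gamma$ with a nonzero coefficient'' and hence $\lambda\sim b\,|\alpha-\alpha_c|^{1/2}$ is not literally true: the integrand contains $\log|\alpha|$, which diverges at this critical point, so the ``coefficient'' of $\gamma_{2N,\alpha}\sim\sqrt{|\alpha|}$ is itself unbounded and one gets $\lambda\sim c\,\sqrt{|\alpha|}\,\log|\alpha|$ rather than a pure power law. This is exactly why the paper adopts the supremum definition of $\nu$ and, in its case (iii), splits $\phi_1$ into the $\log|\alpha|$ piece and the $\log|A|$ piece, showing $\lambda/|\alpha|^{x}\to 0$ for $x<\tfrac12$ and $\to-\infty$ for $x\geq\tfrac12$. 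Your argument still yields $\nu=\tfrac12$ at $(2N,\pm0)$ because $\sqrt{|\alpha|}\,|\log|\alpha||/|\alpha|^{x}\to0$ iff $x<\tfrac12$, but you must carry out this splitting (in the $u$-variable after your reduction) rather than assert a finite nonzero $b$. For the remaining limits $\alpha\to\pm1$ and $|\alpha|\to\frac{1}{(2N+1)^2}$ your deferral to the cited SGB results for the nonvanishing of the first-order coefficient matches what the paper itself does, so no additional gap arises there.
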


\begin{proof}
	If $x=|\cot\theta|^{\frac{1}{\beta}}\mbox{sgn}\left\lbrace \cot\theta\right\rbrace $, 
	\begin{equation}
	\begin{array}{rll}
	\displaystyle\frac{dx}{d\theta} &=& \displaystyle-\frac{|\cot\theta|^{\frac{1}{\beta}-1}}{\beta\sin^2\theta},\\
	\displaystyle\frac{dU}{d\theta} &=& \displaystyle-\frac{K|\alpha K|^{\frac{1}{\beta}}}{\beta}\frac{|\cot(K\theta)|^{\frac{1}{\beta}-1}}{\sin^2(K\theta)},\\
	\therefore \displaystyle\frac{dU}{dx} &=& \displaystyle\frac{dU}{d\theta}\frac{d\theta}{dx}\\
	&=& \displaystyle K|\alpha K|^{\frac{1}{\beta}}\frac{\sin^2\theta}{\sin^2(K\theta)}\left|\frac{\cot(K\theta)}{\cot\theta}\right|^{\frac{1}{\beta}-1}.
	\end{array}
	\end{equation}
	
	In this case, the Lyapunov exponent is calculated as
	\begin{equation}
	\lambda = \displaystyle \frac{1}{\pi}\int_{0}^{\pi}d\theta \log \left|K|\alpha K|^\frac{1}{\beta}\frac{\sin^2\theta}{\sin^2(K\theta)}
	\left|\frac{\cot(K\theta)}{\cot\theta}\right|^{\frac{1}{\beta}-1}\right|\cdot \frac{{\gamma_{K, \alpha}}|\cot\theta|^{\frac{\beta-1}{\beta}}}{\cot^2\theta+{\gamma_{K, \alpha}}^2}.
	\end{equation}

    If the topologically conjugate maps $\bar{S}_{K,\alpha}$ are considered, the Lyapunov exponent is finite and maximised when the invariant density is uniform ($\alpha = \frac{1}{K}$) \cite{lasota2008probabilistic,okubo2018universality}.
	For simplicity, we define $A= K^{1+\frac{1}{\beta}}\frac{\sin^2\theta}{\sin^2(K\theta)}\left|\frac{\cot(K\theta)}{\cot\theta}\right|^{\frac{1}{\beta}-1}$
	($A$ does not depend on ${\gamma_{K, \alpha}}$ or $\alpha$), and $\phi_1$ and $\phi_2$ are expressed as 
	\begin{equation}
	\phi_1(\theta, {\gamma_{K, \alpha}}, \beta) = \displaystyle \log||\alpha|^{\frac{1}{\beta}}A|\frac{{\gamma_{K, \alpha}}|\cot\theta|^{\frac{\beta-1}{\beta}}}{\cot^2\theta + {\gamma_{K, \alpha}}^2},    
	\end{equation}
	\begin{equation}
	\begin{array}{lll}
    \phi_2(\theta, z, \beta) &=& \displaystyle \log\left||\alpha|^{\frac{1}{\beta}}A\right|
	\frac{\frac{1}{{\gamma_{K, \alpha}}}|\cot\theta|^{\frac{\beta-1}{\beta}}}{\frac{1}{{\gamma_{K, \alpha}}^2}\cot^2\theta +1}\\
	&=& \displaystyle \log\left||\alpha|^{\frac{1}{\beta}}A\right|
	\frac{z|\cot\theta|^{\frac{\beta-1}{\beta}}}{z^2\cot^2\theta +1}. 
	\end{array}
	\end{equation}
	where $z \overset{\rm def}{=} \frac{1}{{\gamma_{K, \alpha}}}$.
	
	(i) In the limit of {$\alpha \to 1-0$ or $\alpha \to -1+0$} $(\gamma_{K,\alpha}\to \infty)$, the partial derivative of $\phi_2$ with respect to $z$ is considered 
	($\theta$ and $\beta$, which do not depend on ${\gamma_{K, \alpha}}$,
	and the interval $(a_n, a_{n+1}]$ is considered to correspond to SGB transformations \cite{okubo2018universality} to ensure that $\phi_2$ does not diverge).
	\begin{equation}
	\frac{\partial \phi_2}{\partial z} = \frac{1}{|\alpha|\beta}\frac{z|\cot|^{\frac{\beta-1}{\beta}}}{z^2\cot^2\theta+1}\frac{\partial |\alpha|}{\partial z}+
	\log\left||\alpha|^{\frac{1}{\beta}}A\right|\frac{|\cot\theta|^{\frac{\beta-1}{\beta}}(1-z^2\cot^2\theta)}{(z^2\cot^2\theta+1)^2}.
	\end{equation}
	When $K=2N$, $|\alpha| \sim 1- \frac{4N^2-1}{3}z^2$,   \cite{okubo2018universality}, and
	$|\alpha| \sim 1 - \frac{4N(N+1)}{3}z^2$ when $K=2N+1$ \cite{okubo2018universality}. In both cases,
	$\frac{\partial \phi_2}{\partial z}$ does not diverge in the limit of {$\alpha \to 1-0$ or $\alpha \to -1+0$} $({\gamma_{K, \alpha}} \to +\infty)$.
	According to the method described in a previous study \cite{okubo2018universality}, the following expression is obtained.
	\begin{equation}
	\begin{array}{lll}
	|\lambda_{K,\alpha, \beta}| &<& \infty,\\
	\lambda_{K, \alpha, \beta} &=& \displaystyle \frac{z}{\pi}\sum_{n=0}^{K-1}\int_{a_n}^{a_{n+1}}d\theta\left[
	\frac{\partial \phi_2}{\partial z}(\theta, 0, \beta)+ O(z)
	\right].
	\end{array}
	\end{equation}
	Because $\lambda_{K,\alpha, \beta}$ and $z$ are finite, 
	$\displaystyle  \frac{1}{\pi}\sum_{n=0}^{K-1}\int_{a_n}^{a_{n+1}}d\theta\left[
	\frac{\partial \phi_2}{\partial z}(\theta, 0, \beta)+ O(z)
	\right]$ is also finite. Moreover, $\frac{\partial \phi_2}{\partial z}(\theta, 0, \beta)$ does not depend on $z$.
	Therefore, in the limit of $z\to +0 (\gamma_{K, \alpha} \to \infty, {\alpha \to 1-0~\mbox{or}~\alpha \to -1+0})$, 
	\begin{equation}
	\lambda_{K, \alpha, \beta} = O(z) = O \left(\sqrt{1-|\alpha|}\right). \label{asymptotic critical exponent infty beta}
	\end{equation}
	Thus, 
	\begin{equation}
	   \nu = \frac{1}{2}. \label{eq:critical exponent nu_1}
	\end{equation}

	(ii) In the case of $K=2N+1$ and in the limit of $|\alpha|\to \frac{1}{(2N+1)^2}+0 (\gamma_{2N+1,\alpha} \to 0)$, 
	$\gamma_{2N+1,\alpha} = O\left(\sqrt{|\alpha|-\frac{1}{(2N+1)^2}}\right)$ holds.
	Consider the partial derivative of $\phi_1$ with respect to ${\gamma_{2N+1, \alpha}}$:
	\begin{equation}
	\displaystyle \frac{\partial \phi_1}{\partial {\gamma_{2N+1, \alpha}}}
	=  \displaystyle
	\frac{1}{|\alpha| \beta} \frac{{\gamma_{2N+1, \alpha}} \left|\cot\theta\right|^{\frac{\beta-1}{\beta}}}{\cot^2\theta+ {\gamma_{2N+1, \alpha}}^2} \frac{\partial |\alpha|}{\partial {\gamma_{2N+1, \alpha}}}
	+\log\left||\alpha|^{\frac{1}{\beta}}A\right|
	\frac{\left|\cot\theta\right|^{\frac{\beta-1}{\beta}}(\cot^2\theta -{\gamma_{2N+1, \alpha}}^2)}{(\cot^2\theta+{\gamma_{2N+1, \alpha}}^2)^2}.
	\end{equation}
	In this case, $\frac{\partial \phi_1}{\partial {\gamma_{2N+1, \alpha}}}$ does not diverge as $\alpha \to \frac{1}{(2N+1)^2}+0$. Therefore, 
	\begin{equation}
	\begin{array}{lll}
	|\lambda_{2N+1,\alpha, \beta}| &<& \infty,\\
	\lambda_{2N+1,\alpha,\beta} &=& \displaystyle \frac{1}{\pi}\int_0^\pi d\theta \phi_1(\theta,{\gamma_{2N+1, \alpha}},\beta)\\
	&=&\displaystyle \frac{1}{\pi}\sum_{n=0}^{{2N}}\int_{a_n}^{a_{n+1}}d\theta\left[
	\phi_1(\theta,0,\beta)+\frac{\partial \phi_1}{\partial \gamma_{2N+1,\alpha}}(\theta,0,\beta)\gamma_{2N+1,\alpha}+O(\gamma_{2N+1,\alpha})
	\right]\\
	&=&\displaystyle \frac{\gamma_{2N+1,\alpha}}{\pi}\sum_{n=0}^{{2N}}\int_{a_n}^{a_{n+1}}d\theta\left[
	\frac{\partial \phi_1}{\partial \gamma_{2N+1,\alpha}}(\theta,0,\beta)+O(\gamma_{2N+1,\alpha})
	\right].
	\end{array}
	\end{equation}
	Since $\lambda_{2N+1,\alpha,\beta}$ and $\gamma_{2N+1,\alpha}$ are finite, 
	$\displaystyle \frac{1}{\pi}\sum_{n=0}^{{2N}}\int_{a_n}^{a_{n+1}}d\theta\left[
	\frac{\partial \phi_1}{\partial {\gamma_{2N+1, \alpha}}}(\theta,0,\beta)+O({\gamma_{2N+1, \alpha}})\right]$ is finite.
	Moreover, $\frac{\partial \phi_1}{\partial {\gamma_{2N+1, \alpha}}}(\theta,0,\beta)$ does not depend on $\gamma_{2N+1,\alpha}$.
	Therefore, in the limit of $\gamma_{2N+1, \alpha} \to +0 ~(|\alpha| \to \frac{1}{(2N+1)^2}+0)$, 
	\begin{equation}
	\lambda_{2N+1,\alpha,\beta} = O(\gamma_{2N+1, \alpha}) = O\left(\sqrt{|\alpha|-\frac{1}{(2N+1)^2}}\right).
	\label{asymptotic critical exponent 0 beta}
	\end{equation}
	Therefore, $\nu$ is defined as
	\begin{equation}
	    \nu=\frac{1}{2}.\label{eq:critical exponent nu_2}
	\end{equation}
	
	(iii) In the case of $K=2N$ and in the limit of ${\alpha\to \pm0}$ $(\gamma_{2N,\alpha} \to 0)$, 
	$\gamma_{2N, \alpha} \sim \sqrt{|\alpha|}$.
	Therefore,
	\begin{equation}
	    |\lambda_{2N,\alpha,\beta}|<\infty
	\end{equation}
	\begin{equation}
	     \lambda_{2N,\alpha,\beta} 
	     = \displaystyle\frac{1}{\pi}\int_0^\pi d\theta \left[
	     \frac{1}{\beta}\gamma_{2N,\alpha}\log|\alpha|\frac{|\cot\theta|^{\frac{\beta-1}{\beta}}}{\cot^2\theta + {\gamma_{2N, \alpha}}^2}+
	     \gamma_{2N,\alpha}\log|A|\frac{|\cot\theta|^{\frac{\beta-1}{\beta}}}{\cot^2\theta + {\gamma_{2N, \alpha}}^2}
	     \right].
	\label{Eq: (iii) scaling relation of Lyapunov first order}
	\end{equation}

	For the first term of \eqref{Eq: (iii) scaling relation of Lyapunov first order}, according to \eqref{Scaling 0<alpha<1/K, K=2N honbun},
	in the limit of $|\alpha| \to 0$,
	\begin{equation}
	\frac{1}{\beta}\frac{\gamma_{2N,\alpha}\log|\alpha|}{|\alpha|^x}\frac{|\cot\theta|^{\frac{\beta-1}{\beta}}}{\cot^2\theta + {\gamma_{2N, \alpha}}^2}
	=
	\left\lbrace
	\begin{array}{ll}
	0, & x<\frac{1}{2}\\
	-\infty, & x \geq \frac{1}{2}
	\end{array}
	\right.
	\end{equation}
	
	For the second term of \eqref{Eq: (iii) scaling relation of Lyapunov first order}, according to \eqref{Scaling 0<alpha<1/K, K=2N honbun}, the following relation holds in the limit of $|\alpha| \to 0$.
	\begin{equation}
	\begin{array}{lll}
	\displaystyle \gamma_{2N,\alpha}\log|A|\frac{|\cot\theta|^{\frac{\beta-1}{\beta}}}{\cot^2\theta + {\gamma_{2N, \alpha}}^2} = O(\sqrt{|\alpha|})
	\end{array}
	\end{equation}
	
	Therefore, in the limit of $|\alpha| \to +0$, 
	\begin{equation}
	    \displaystyle \frac{\lambda_{{2N},\alpha,\beta}}{|\alpha|^x}=\frac{1}{\pi}\int_0^\pi \frac{\phi_1(\theta,{\gamma_{2N, \alpha}},\beta)}{|\alpha|^x}d\theta
	    \left\lbrace
	\begin{array}{ll}
	\to 0, & x<\frac{1}{2}\\
	\not\to 0, & x \geq \frac{1}{2}
	\end{array}
	\right.
	\label{asymptotic critical exponent 0 K=2N beta}
	\end{equation}
	Therefore, in the limit of $\gamma_{2N,\alpha}\to +0 ({\alpha\to \pm0})$,
	we obtained the critical exponent of the Lyapunov exponent as
	\begin{equation}
	   \nu = \frac{1}{2}. \label{eq:critical exponent nu_3}
	\end{equation}
	
	Theorem \ref{Scaling behavior U_{K, alpha, beta}} holds according to Equations \eqref{eq:critical exponent nu_1}, \eqref{eq:critical exponent nu_2}, and
	\eqref{eq:critical exponent nu_3}.
\end{proof}

\begin{figure}[h]
		\centering
		\hspace*{.5cm}
		\includegraphics[width = .6\columnwidth]{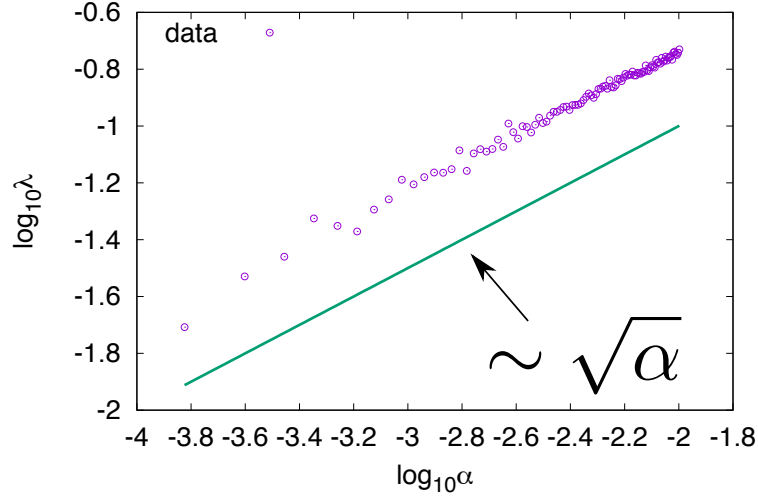}
		\caption{Scaling behavior of the Lyapunov exponent for $K=2 and \beta=1.5$. The initial point is $x_0=\sqrt{3}$. The number of iterations is $10^5$.}
		\label{Fig: Lyapunov scaling }
\end{figure}

\section{Conclusion}
In this study, we showed that the universality class of the chaos transition between the chaotic state and predictable state is more robust compared with the universality class of the power law index $\beta$ of stable laws in the framework of GCLT. To obtain this, ESGB transformations were introduced to derive the critical exponents $\nu$ of the Lyapunov exponent for a wide class of maps with ergodic invariant probability density function with the power {law} index $\beta+1$. The analytically obtained critical exponents $\nu(=1/2)$ are consistent with the critical exponents for the intermittent maps as predicted by Pomeau and Manneville \cite{pomeau1980intermittent}.
For ESGB transformations, through the applications to the topological conjugacy, we succeeded in extending the class of invariant density function from the Cauchy distribution (invariant density function of SGB transformations)
to $\rho(x) = \frac{1}{\pi}\frac{\gamma|x|^{\beta-1}}{|x|^{2\beta}+\gamma^2}$. 
Moreover, we proved that the ESGB transformations are \textit{exact} when parameters $(K, \alpha)$ are
in Range B regardless of $\beta$, which controls the degree of the fat tail.

In generalised Boole (GB) transformations \cite{umeno2016exact} (corresponding to $K=2$ for SGB transformations) whose invariant density function is the Cauchy distribution, we previously determined the critical exponent of the Lyapunov exponent as $\nu = \frac{1}{2}$ and $\alpha\to +0$ using the \textit{explicit} form of the Lyapunov exponent as a function of the parameters.
For the case of SGB transformations \cite{okubo2018universality}, we could not obtain the critical exponent $\nu$ at $(K, \alpha)=(2N, 0)$ for any $N\in \mathbb{N}$ because it is difficult to obtain the explicit form of the Lyapunov exponent for any $K=2N$. 

We obtained the critical exponent as $\nu = \frac{1}{2}$ at $(K, {\alpha})=(2N, {\pm0})$ by defining the critical exponent. Therefore, $\nu= \frac{1}{2}$
at {$(K, \alpha) = (N, 1-0),(N, -1+0), (2N+1, \frac{1}{(2N+1)^2}+0), (2N+1, -\frac{1}{(2N+1)^2}-0), \mbox{and} (2N, \pm0)$}
regardless of parameter $\beta$. It is noted that $\nu$ does not depend on $\beta$. This result is an extension of the result for GB transformations. Note that the critical exponents are independent of $\beta$, demonstrating the validity of the estimation conducted by Pomeau and Manneville \cite{pomeau1980intermittent} 
for a countably infinite number of maps whose invariant density belongs to the Cauchy and wider classes. 
In other words, the universality of the critical exponent $\nu=\frac{1}{2}$ is applicable over the wider class. 

In future investigations, we intend to identify a possible scaling relation between $\nu$ and other critical exponents because a certain scaling relation \cite{fisher1967theory} could unify the universality class of phase transition. 
We expect our rigorous analysis to help enhance the understanding of chaos and critical phenomena of transition to chaos and provide novel insights regarding chaos in physics.

%\begin{acknowledgments}
%hogehoge
%\end{acknowledgments}

\appendix
\section{Proof that the map $U_{K,\alpha, \beta}$ is a $K$-to-one map}
In this section, we prove that the map $U_{K,\alpha, \beta}$ is a $K$-to one map.
The number of inverse images $x$ is exactly $K$ if $X$ is given. 
\begin{equation}
    X = \left|\alpha K F_K\left(|x|^\beta \mbox{sgn}(x)\right)\right|^\frac{1}{\beta} 
    \mbox{sgn}\left\lbrace \alpha F_K\left(|x|^\beta \mbox{sgn}(x)\right) \right\rbrace.
    \label{Eq:Apendix X}
\end{equation}

\noindent(i) In the case $\alpha F_K\left(|x|^\beta\mbox{sgn}(x)\right)\geq 0,$

Eq.(\ref{Eq:Apendix X}) is transformed as 
\begin{equation*}
    \begin{array}{lll}
        \displaystyle X &=&  \left\lbrace\alpha K F_K\left(|x|^\beta \mbox{sgn}(x)\right)\right\rbrace^\frac{1}{\beta}\\
        \displaystyle X^\beta &=&  \alpha K F_K\left(|x|^\beta \mbox{sgn}(x)\right)\\
        \displaystyle \frac{X^\beta}{\alpha K} &=& F_K\left(|x|^\beta \mbox{sgn}(x)\right) =F_K(\cot \theta) = \cot(K\theta).
    \end{array}
\end{equation*}
The properties of $\cot(K\theta)$ indicate that the number of inverse images of $\cot(K\theta)$ is $K$. Then,
\begin{equation*}
    \begin{array}{cll}
        \displaystyle K\theta & =& \displaystyle\mbox{arccot}\left(\frac{X^\beta}{\alpha K}\right) + j\pi \mod{\pi}, j=0,\cdots\\
        \displaystyle\theta &=& \displaystyle\frac{1}{K}\mbox{arccot}\left(\frac{X^\beta}{\alpha K}\right) + \frac{j}{K}\pi \mod{\pi}, j=0,\cdots,K-1.\\
        \displaystyle\therefore |x|^\beta \mbox{sgn}(x) &=& \displaystyle\cot\left\lbrace \frac{1}{K}\mbox{arccot}\left(\frac{X^\beta}{\alpha K}\right) + \frac{j}{K}\pi\right\rbrace, j=0,\cdots, K-1.
    \end{array}
\end{equation*}
As $|x|^\beta \mbox{sgn}(x)$ is one-to-one, the number of points $x$ each of which is an inverse image of $X$ is exactly $K$.

\noindent(ii) In the case $\alpha F_K\left(|x|^\beta\mbox{sgn}(x)\right)< 0,$
we have
\begin{equation*}
    -\frac{(-X)^\beta}{\alpha K} = F_K\left(|x|^\beta \mbox{sgn}(x)\right) = F_K(\cot \theta) = \cot(K\theta).
\end{equation*}
This case can be discussed similarly to the discussion regarding (i).

\bibliographystyle{ptephy}
\bibliography{PREESGB2}

\providecommand{\noopsort}[1]{}\providecommand{\singleletter}[1]{#1}%
\begin{thebibliography}{10}

\bibitem{liu2002transition}
Zonghua Liu, Ying-Cheng Lai, Lora Billings, and Ira~B Schwartz, Physical Review
  Letters, {\bf 88}(12), 124101 (2002).

\bibitem{lai2017quasiperiodicity}
Ying-Cheng Lai and Celso Grebogi, The European Physical Journal Special Topics,
  {\bf 226}(9), 1703--1719 (2017).

\bibitem{cosenza2010lyapunov}
MG~Cosenza, O~Alvarez-Llamoza, and GA~Ponce, Communications in Nonlinear
  Science and Numerical Simulation, {\bf 15}(9), 2431--2435 (2010).

\bibitem{liu2003universal}
Zonghua Liu, Ying-Cheng Lai, and Manuel~A Mat{\'\i}as, Physical Review E, {\bf
  67}(4), 045203 (2003).

\bibitem{milosavljevic2017analytic}
Marko~S Milosavljevic, Jonathan~N Blakely, Aubrey~N Beal, and Ned~J Corron,
  Physical Review E, {\bf 95}(6), 062223 (2017).

\bibitem{ott1994blowout}
Edward Ott and John~C Sommerer, Physics Letters A, {\bf 188}(1), 39--47 (1994).

\bibitem{benettin1984power}
Giancarlo Benettin, Physica D: Nonlinear Phenomena, {\bf 13}(1-2), 211--220
  (1984).

\bibitem{miller1988stochastic}
BN~Miller and K~Ravishankar, Journal of Statistical Physics, {\bf 53}(5),
  1299--1314 (1988).

\bibitem{huberman1980scaling}
BA~Huberman and J~Rudnick, Physical Review Letters, {\bf 45}(3), 154 (1980).

\bibitem{lamba1994scaling}
H~Lamba and CJ~Budd, Physical Review E, {\bf 50}(1), 84 (1994).

\bibitem{pomeau1980intermittent}
Yves Pomeau and Paul Manneville, Communications in Mathematical Physics, {\bf
  74}(2), 189--197 (1980).

\bibitem{manneville1979intermittency}
Paul Manneville and Yves Pomeau, Physics Letters A, {\bf 75}(1-2), 1--2 (1979).

\bibitem{hioe1987stability}
Foek~T Hioe and Z~Deng, Physical Review A, {\bf 35}(2), 847 (1987).

\bibitem{swinney1983observations}
Harry~L Swinney, Physica D: Nonlinear Phenomena, {\bf 7}(1-3), 3--15 (1983).

\bibitem{ono1995critical}
Y~Ono, K~Fukushima, and T~Yazaki, Physical Review E, {\bf 52}(4), 4520 (1995).

\bibitem{feng1996type}
DL~Feng, J~Zheng, W~Huang, CX~Yu, and WX~Ding, Physical Review E, {\bf 54}(3),
  2839 (1996).

\bibitem{aizawa1984statistical}
Yoji Aizawa, Chikara Murakami, and Tamotsu Kohyama, Progress of Theoretical
  Physics Supplement, {\bf 79}, 96--124 (1984).

\bibitem{kong2021machine}
Ling-Wei Kong, Hua-Wei Fan, Celso Grebogi, and Ying-Cheng Lai, Physical Review
  Research, {\bf 3}(1), 013090 (2021).

\bibitem{umeno2016exact}
Ken Umeno and Ken-ichi Okubo, Progress of Theoretical and Experimental Physics,
  {\bf 2016}(2), 021A01 (2016).

\bibitem{okubo2018universality}
Ken-ichi Okubo and Ken Umeno, Progress of Theoretical and Experimental Physics,
  {\bf 2018}(10), 103A01 (2018).

\bibitem{utsu1995centenary}
Tokuji Utsu, Yosihiko Ogata, et~al., Journal of Physics of the Earth, {\bf
  43}(1), 1--33 (1995).

\bibitem{utsu1999representation}
Tokuji Utsu,
\newblock Representation and analysis of the earthquake size distribution: a
  historical review and some new approaches,
\newblock In {\em Seismicity patterns, their statistical significance and
  physical meaning}, pages 509--535. Springer (1999).

\bibitem{albert1999diameter}
R{\'e}ka Albert, Hawoong Jeong, and Albert-L{\'a}szl{\'o} Barab{\'a}si, nature,
  {\bf 401}(6749), 130--131 (1999).

\bibitem{gabaix2009power}
Xavier Gabaix, Annu. Rev. Econ., {\bf 1}(1), 255--294 (2009).

\bibitem{gopikrishnan2000statistical}
Parameswaran Gopikrishnan, Vasiliki Plerou, Xavier Gabaix, and H~Eugene
  Stanley, Physical review e, {\bf 62}(4), R4493 (2000).

\bibitem{altmann2007hypothesis}
Eduardo~G Altmann and Holger Kantz, EPL (Europhysics Letters), {\bf 78}(1),
  10008 (2007).

\bibitem{gnedenko1954limit}
BV~Gnedenko and AN~Kolmogorov,
\newblock {\em Limit Distributions for Sums of Independent Random Variables},
\newblock  (Addison-Wesley Educational Publishers Inc., 1954).

\bibitem{umeno1998superposition}
Ken Umeno, Physical Review E, {\bf 58}(2), 2644 (1998).

\bibitem{umeno2016ergodic}
Ken Umeno, Nonlinear Theory and Its Applications, IEICE, {\bf 7}(1), 14--20
  (2016).

\bibitem{okubo2021infinite}
Ken-ichi Okubo and Ken Umeno, Chaos: An Interdisciplinary Journal of Nonlinear
  Science, {\bf 31}(3), 033135 (2021).

\bibitem{lasota2008probabilistic}
Andrzej Lasota and Michael~C Mackey,
\newblock {\em Probabilistic properties of deterministic systems},
\newblock  (Cambridge university press, 2008).

\bibitem{fisher1967theory}
Michael~E Fisher, Reports on Progress in Physics, {\bf 30}(2), 615 (1967).

\end{thebibliography}
%\section*{References}
%\bibliography{PREESGB2}% Produces the bibliography via BibTeX.
%\bibliography{aipsamp}
%\bibliographystyle{unsrt}

\end{document}